\pdfoutput=0
\documentclass[conference]{IEEEtran}
\usepackage{amsfonts}
\usepackage{graphicx}
\usepackage{color}
\usepackage{amsmath,amsfonts,amssymb,amsthm,epsfig,epstopdf,url,array}
\usepackage[framed,numbered,autolinebreaks,useliterate]{mcode}
\usepackage{url,textcomp}
\usepackage{authblk}
\usepackage{cite}

\newtheorem{theorem}{Theorem}
\newtheorem{lemma}{Lemma}

\begin{document}
\title{Outage Analysis on Type I HARQ over Time-Correlated Rayleigh Fading Channels}
\author {Zheng~Shi\thanks{The authors are with the Department of Electrical and Computer Engineering, University of Macau, Macau.}}
\author{Shaodan~Ma}
\author{Kam-Weng Tam}
\affil {Department of Electrical and Computer Engineering, University of Macau, Macau}
\maketitle
\begin{abstract}
This paper analyzes Type I hybrid automatic repeat request (HARQ) over time-correlated Rayleigh fading channels. Due to the presence of channel time correlation, the analysis is more challenging than the prior analysis in the literature. Outage probability is first derived as a weighted sum of joint CDF of multiple independent Gamma random variables based on an infinite series representation. A truncation method is proposed for efficient computation of the outage probability and it is proved that the truncation error decreases exponentially with the truncation order. Asymptotic outage probability is then derived in a simple form, with which the impacts of packet transmission rate, transmit power and channel time correlation could be decoupled and analyzed clearly. Based on the asymptotic outage probability, diversity order of HARQ is also analyzed. It is found that full diversity can be achieved even under time correlated fading channels and the time correlation of the channels has negative effect on the outage probability under high signal-to-noise ratio (SNR). Finally, our analytical results are validated by Monte-Carlo simulations.

\end{abstract}
\begin{IEEEkeywords}
Hybrid automatic repeat request, time correlation, Rayleigh fading, asymptotic analysis.
\end{IEEEkeywords}
\IEEEpeerreviewmaketitle
\section{Introduction}
Hybrid automatic repeat request (HARQ) is a reliable and powerful technique to combat the detrimental effect of fading and noisy channels by using the combination of forward error correction code (FEC) and automatic repeat request (ARQ). There are two kinds of HARQ schemes, i.e., Type I HARQ and HARQ with soft combining \cite{dahlman20134g}. For Type I HARQ, the erroneously received packets are discarded when retransmissions are requested, while for the other kind of HARQ, those erroneously received packets are stored in a buffer memory for joint decoding with subsequent received packets from retransmissions. Based on the decoding approaches adopted, HARQ with soft combining is further classified into two types, i.e., HARQ with chase combining (HARQ-CC) and HARQ with incremental redundancy (HARQ-IR). Although Type I HARQ performs worse than HARQ with soft combining, it has lower decoding complexity and requires less memory. It still has quite a lot of applications in practice. We thus put our focus on the analysis of Type I HARQ.

Most of the prior analyses in the literature consider either quasi-static fading channels (fully correlated fading channels) \cite{shen2009average} or independent fading channels \cite{caire2001throughput,zheng2005optimizing,chaitanya2015energy}. To be specific, in \cite{shen2009average}, packets in all HARQ rounds are assumed to experience an identical channel realization, and the average throughput of Type I HARQ is optimized through power allocation among retransmissions. Unlike \cite{shen2009average}, independent Rayleigh fading channels are considered in \cite{caire2001throughput}, and the throughput of Type I HARQ is analyzed. Considering the same independent fading channels as \cite{caire2001throughput}, a systematic approach with adaptive modulation and coding is proposed to maximize the throughput for Type I HARQ in \cite{zheng2005optimizing}. Moreover, MIMO systems with Type-I HARQ are investigated over independent fading channels in \cite{chaitanya2015energy}. An optimal power allocation solution is derived in closed-form to minimize the asymptotic outage probability given an average power constraint.

Except the above two types of fading channels, another general channel model is time correlated fading channel, which usually occurs when the transceiver has low-to-medium mobility \cite{kim2011optimal,jin2011optimal}. In \cite{kim2011optimal,jin2011optimal}, the time correlation among fading channels is considered in the analysis of HARQ-CC. The most fundamental metric, outage probability, is analyzed by using approximations. However, it is hard to extract meaningful insight of time correlation and other parameters from the complicated expression of the outage probability.

Considering the wide occurrence of time correlated fading channels, we analyze Type I HARQ over time-correlated Rayleigh fading channels in this paper. The fading channels are modeled as a multivariate Rayleigh distribution with exponential correlation. The outage probability is first derived as a weighted sum of joint CDF of multiple independent Gamma random variables (RVs) based on an infinite series representation. For efficient computation of the outage probability, a truncation method is proposed and it is proved that the truncation error decreases exponentially with the truncation order. To extract meaningful insight, asymptotic outage probability is then derived in a simple form, with which the impacts of packet transmission rate, transmit power and channel time correlation could be decoupled and analyzed clearly. The result of asymptotic outage probability also enables the analysis of diversity order. It is found that full diversity can be achieved even under time correlated fading channels and the time correlation of the channels has negative effect on the outage probability under high signal-to-noise ratio (SNR). Finally, our analytical results are validated by Monte-Carlo simulations. Our analysis could thus serve a solid foundation for system design and optimization.


The remainder of this paper is organized as follows. In Section \ref{sec:sys_mod}, the system model is introduced. Outage analysis is conducted and the asymptotic result is derived in Section \ref{sec:per_ana}. Numerical results are then presented for validations and discussions in Section \ref{sec:num}. Finally, Section \ref{sec:con} concludes this paper.

\section{System Model}\label{sec:sys_mod}
This paper considers a point-to-point Type I HARQ system operating over time-correlated Rayleigh fading channels. Each message comprises $b$ bits of information. For reliable transmission, the $b$ bits information is encoded by a channel encoder at a rate of $R$. Following Type I HARQ protocol, the same encoded packet is repetitively transmitted to the destination in multiple HARQ rounds until an acknowledgement (ACK) of successful reception is received from the destination or the maximum number of transmissions is reached. At the destination, signal detection is performed based on the received signal on current HARQ round. If the detection is failed, the destination would discard the received signal and feed back a negative acknowledge (NACK) to request a retransmission to the source.

Denote the encoded message with unit mean power as $x$. At the $k$th HARQ round, the encoded message is transmitted with a power of $P_k$ through a noisy time-correlated Rayleigh fading channel. The received signal at the destination in the $k$th HARQ round is then written as
\begin{equation}\label{eqn:rec_sig_des}
{y_k} = \sqrt {{P_k}} {h_k}x + {n_k},\quad 1 \le k \le K,
\end{equation}
where $K$ denotes the maximum number of transmissions, $n_k$ refers to complex Gaussian white noise with zero mean and unit variance, i.e., $n_k \sim \mathcal {CN} (0,1)$, and ${h_k}$ denotes Rayleigh fading channel coefficient. In this paper, time-correlated Rayleigh fading channel is considered and the channel coefficient $h_k$ is generally modeled as a multivariate Rayleigh distribution with exponential correlation, such that
\begin{equation}\label{eqn:mod_mul_ray_exp_corr}
{h_k} = {\rho ^{k + \delta  - 1}}\sigma_k{h_0} + \sqrt {1 - {\rho ^{2\left( {k + \delta  - 1} \right)}}} \sigma_k {w_k},\, 1 \le k \le K,\delta  > 0,
\end{equation}
where $\rho$ and $\delta$ denote the time correlation and the channel feedback delay, respectively, and $h_0, w_1, \cdots, w_K$ follow independent and identical complex Gaussian distributions with zero mean and unit variance, i.e., $h_0, w_1, \cdots w_K \sim \mathcal {CN} (0,1)$ and the average channel power gain $|h_k|^2$ is ${\rm E}(|h_k|^2)={\sigma_k}^2$. As derived in \cite{beaulieu2011novel}, the joint PDF of the channel amplitudes $|h_1|,\cdots,|h_K|$ is given by
\begin{multline}\label{eqn:joint_pdf_mul_rey_org}
{f_{\left| {{h_1}} \right|, \cdots ,\left| {{h_K}} \right|}}\left( {{x_1}, \cdots ,{x_K}} \right) = \int\limits_0^\infty  {{e^{ - t}}\prod\limits_{k = 1}^K {\frac{{{x_k}}}{{{\sigma _k}^2\left( {\frac{{1 - {\rho ^{2\left( {k + \delta  - 1} \right)}}}}{2}} \right)}}} } \times \\
{e^{ - \frac{{{x_k}^2 + {\sigma _k}^2{\rho ^{2\left( {k + \delta  - 1} \right)}}t}}{{{\sigma _k}^2\left( {1 - {\rho ^{2\left( {k + \delta  - 1} \right)}}} \right)}}}}{I_0}\left( {\frac{{{x_k}\sqrt {t{\sigma _k}^2{\rho ^{2\left( {k + \delta  - 1} \right)}}} }}{{{\sigma _k}^2\left( {\frac{{1 - {\rho ^{2\left( {k + \delta  - 1} \right)}}}}{2}} \right)}}} \right)dt,{\mkern 1mu} 0 \le \rho  < 1.
\end{multline}

At the destination, the received signal-to-noise ratio (SNR) in the $k$th HARQ round is therefore given as
\begin{equation}\label{eqn:SNR_k}
{\gamma _k} = {P_k}{\left| {{h_k}} \right|^2}.
\end{equation}
Due to the time correlation among the channel coefficients $h_k$, the received SNRs are correlated among multiple HARQ rounds, which then complicates the analysis as shown in the following section.


\section{Outage Analysis}\label{sec:per_ana}
With Type I HARQ, outage may still occur when the transmissions in all HARQ rounds fail. The outage probability can then be written from information theoretical perspective as \cite{caire2001throughput,shen2009average}
\begin{equation}\label{eqn:out_k_harq}
{\mathcal P_{out}}\left( K \right) = \Pr \left( {{I_1} < R, \cdots ,{I_K} < R} \right)
\end{equation}
where $I_k$ represents the mutual information in the $k$th HARQ round and is given as ${I_k} = {\log _2}\left( {1 + {\gamma _k}} \right)$. Substituting the definition of $I_k$ into (\ref{eqn:out_k_harq}), the outage probability can be further written as
\begin{align}\label{eqn:out_k_harq_fur}
{\mathcal P_{out}}\left( K \right) &= \Pr \left( {{\gamma _1} < {2^R} - 1, \cdots ,{\gamma _K} < {2^R} - 1} \right)  \notag \\
&= {F_{{\gamma _1}, \cdots ,{\gamma _K}}}\left( {{2^R} - 1, \cdots ,{2^R} - 1} \right)
\end{align}
where ${F_{{\gamma _1}, \cdots ,{\gamma _K}}}\left( \cdot \right)$ denotes the joint CDF of SNRs $\gamma_1,\cdots ,\gamma_K$.
Clearly, the joint CDF ${F_{{\gamma _1}, \cdots ,{\gamma _K}}}\left( \cdot \right)$ should be derived to obtain ${\mathcal P_{out}}\left( K \right)$.
\subsection{Series representation of ${\mathcal P_{out}}\left( K \right)$}
 To derive ${F_{{\gamma _1}, \cdots ,{\gamma _K}}}\left( \cdot \right)$, the joint PDF ${f_{{\gamma _1}, \cdots ,{\gamma _K}}}\left(  \cdot  \right)$ with respect to $\gamma_1,\cdots ,\gamma_K$ is first derived in the following lemma.
\begin{lemma}\label{lemm:pdf}
Given ${\gamma _k} = {P_k}{\left| {{h_k}} \right|^2}$, the joint PDF of SNRs $\gamma_1,\cdots ,\gamma_K$ is given by
\begin{multline}\label{eqn:joint_pdf_snrs}
{f_{{\gamma _1}, \cdots ,{\gamma _K}}}\left( {{r_1}, \cdots ,{r_K}} \right) = \int\limits_0^\infty  {{e^{ - t}}\prod\limits_{k = 1}^K {\frac{1}{{{P_k}{\sigma _k}^2\left( {1 - {\rho ^{2\left( {k + \delta  - 1} \right)}}} \right)}}} } \times \\
{e^{ - \frac{{{r_k} + {P_k}{\sigma _k}^2{\rho ^{2\left( {k + \delta  - 1} \right)}}t}}{{{P_k}{\sigma _k}^2\left( {1 - {\rho ^{2\left( {k + \delta  - 1} \right)}}} \right)}}}}_0{F_1}\left( {;1;\frac{{{\rho ^{2\left( {k + \delta  - 1} \right)}}t{r_k}}}{{{P_k}{\sigma _k}^2{{\left( {1 - {\rho ^{2\left( {k + \delta  - 1} \right)}}} \right)}^2}}}} \right)dt
\end{multline}
where ${}_0F_1(\cdot)$ denotes the confluent hypergeometric limit function.
\end{lemma}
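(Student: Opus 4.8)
The plan is to obtain the joint PDF of $\gamma_1,\dots,\gamma_K$ directly from the joint PDF of the channel amplitudes $|h_1|,\dots,|h_K|$ in (\ref{eqn:joint_pdf_mul_rey_org}) by a change of variables. Since $\gamma_k=P_k|h_k|^2$ and $|h_k|>0$, the map $x_k\mapsto r_k=P_k x_k^2$ is a one-to-one differentiable transformation on $(0,\infty)$ with inverse $x_k=\sqrt{r_k/P_k}$ and derivative $dx_k/dr_k=1/(2\sqrt{P_k r_k})$. Because the transformation acts coordinatewise, the Jacobian of the full $K$-dimensional map is the product $\prod_{k=1}^K 1/(2\sqrt{P_k r_k})$, so that
\[
f_{\gamma_1,\dots,\gamma_K}(r_1,\dots,r_K)=f_{|h_1|,\dots,|h_K|}\!\left(\sqrt{r_1/P_1},\dots,\sqrt{r_K/P_K}\right)\prod_{k=1}^K \frac{1}{2\sqrt{P_k r_k}}.
\]

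Next I would substitute $x_k=\sqrt{r_k/P_k}$ into the integrand of (\ref{eqn:joint_pdf_mul_rey_org}) and collect terms factor by factor. The algebraic prefactor $x_k/[\sigma_k^2(1-\rho^{2(k+\delta-1)})/2]$ multiplied by the Jacobian factor $1/(2\sqrt{P_k r_k})$ collapses to $1/[P_k\sigma_k^2(1-\rho^{2(k+\delta-1)})]$, matching the prefactor in (\ref{eqn:joint_pdf_snrs}); the $x_k^2$ appearing in the exponent becomes $r_k/P_k$, and after multiplying numerator and denominator by $P_k$ this yields the exponent $-[r_k+P_k\sigma_k^2\rho^{2(k+\delta-1)}t]/[P_k\sigma_k^2(1-\rho^{2(k+\delta-1)})]$ as claimed. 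The only nontrivial rewriting is the Bessel term: using the standard identity $I_0(z)={}_0F_1(;1;z^2/4)$, the argument $z=2x_k\sqrt{t\sigma_k^2\rho^{2(k+\delta-1)}}/[\sigma_k^2(1-\rho^{2(k+\delta-1)})]$ gives $z^2/4=\rho^{2(k+\delta-1)}t\,r_k/[P_k\sigma_k^2(1-\rho^{2(k+\delta-1)})^2]$, which is exactly the argument of ${}_0F_1$ in (\ref{eqn:joint_pdf_snrs}). Since the change of variables does not touch the auxiliary integration variable $t$, the outer integral $\int_0^\infty e^{-t}(\cdot)\,dt$ is carried along unchanged, and (\ref{eqn:joint_pdf_snrs}) follows.

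The main thing that requires care, rather than a genuine obstacle, is the bookkeeping of the exponents $2(k+\delta-1)$ through the substitution and the verification of the special-function identity $I_0(z)={}_0F_1(;1;z^2/4)$ (equivalently, matching $I_0(z)=\sum_{m\ge 0}(z/2)^{2m}/(m!)^2$ with ${}_0F_1(;1;\cdot)=\sum_{m\ge 0}(\cdot)^m/(m!)^2$, using $(1)_m=m!$). It is also worth noting that every factor in the integrand is nonnegative, so no interchange-of-limit issue arises in handling the integral representation. I expect the proof to be short, essentially the two displays above together with this identity check.
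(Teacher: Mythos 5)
Your proposal is correct and follows essentially the same route as the paper's own proof: apply the Jacobian transformation $\gamma_k = P_k|h_k|^2$ to the joint amplitude PDF in (\ref{eqn:joint_pdf_mul_rey_org}) and rewrite the Bessel factor via $I_0(z) = {}_0F_1\left(;1;z^2/4\right)$ (the paper's identity with $m=1$). You merely spell out the bookkeeping that the paper summarizes as ``some algebraic manipulations,'' and your verification of the prefactor, exponent, and ${}_0F_1$ argument all checks out.
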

\begin{proof}
By substituting the identity ${I_{m - 1}}\left( x \right) = \frac{{{{\left( {\frac{x}{2}} \right)}^{m - 1}}}}{{\Gamma \left( m \right)}}{}_0{F_1}\left( {;m;{{\left( {\frac{x}{2}} \right)}^2}} \right)$ \cite[Eq. 10.39.9]{olver2010nist} into (\ref{eqn:joint_pdf_mul_rey_org}), and then applying Jacobian transformation as ${\gamma _k} = {P_k}{\left| {{h_k}} \right|^2}$, the lemma holds after some algebraic manipulations.
\end{proof}
By using Lemma \ref{lemm:pdf}, the joint CDF of SNRs $\gamma_1,\cdots ,\gamma_K$ can then be derived in the following theorem.
\begin{theorem} \label{the:joint_CDF_mix}The CDF of ${F_{{\gamma _1}, \cdots ,{\gamma _K}}}\left( {{z_1}, \cdots ,{z_K}} \right) $ can be written as a weighted sum of joint CDF of $K$ independent Gamma RVs ${{\bf A}_{\bf{n}}}$ with parameters $(n_k+1,{{P_k}{\sigma _k}^2\left( {1 - {\rho ^{2\left( {k + \delta  - 1} \right)}}} \right)})$. More precisely,
\begin{equation}\label{eqn:joint_CDF_of_snrs_the}
{F_{{\gamma _1}, \cdots ,{\gamma _K}}}\left( {{z_1}, \cdots ,{z_K}} \right) = \sum\limits_{{n_{\rm{1}}}, \cdots ,{n_K} = 0}^\infty  {{W_{\bf{n}}}{F_{{{\bf A}_{\bf{n}}}}}{\left( {{z_1}, \cdots ,{z_K}} \right)}}
\end{equation}
where ${\bf{n}} = \left[ {{n_1}, \cdots ,{n_K}} \right]$, the coefficient ${W_{\bf{n}}}$ is given as
\begin{equation}\label{eqn:W_n_def}
{W_{\bf{n}}} = \frac{{\rm{1}}}{{1 + \sum\limits_{k = 1}^K {\frac{{{\rho ^{2\left( {k + \delta  - 1} \right)}}}}{{1 - {\rho ^{2\left( {k + \delta  - 1} \right)}}}}} }}\frac{{\left( {\sum\limits_{k = 1}^K {{n_k}} } \right)!}}{{\prod\limits_{k = 1}^K {{n_k}!} }}\prod\limits_{k = 1}^K {{{\left( {\frac{{\frac{{{\rho ^{2\left( {k + \delta  - 1} \right)}}}}{{1 - {\rho ^{2\left( {k + \delta  - 1} \right)}}}}}}{{1 + \sum\limits_{k = 1}^K {\frac{{{\rho ^{2\left( {k + \delta  - 1} \right)}}}}{{1 - {\rho ^{2\left( {k + \delta  - 1} \right)}}}}} }}} \right)}^{{n_k}}}}
\end{equation}
with $\sum\limits_{{n_{\rm{1}}}, \cdots ,{n_K} = 0}^\infty  {{W_{\bf{n}}}} {\rm{ = }}1$, and ${F_{{\bf A_{\bf{n}}}}}\left( {{z_1}, \cdots ,{z_K}} \right)$ is explicitly expressed as
\begin{equation}\label{eqn:F_ind_snr_CDF}
{F_{{\bf A_{\bf{n}}}}}\left( {{z_1}, \cdots ,{z_K}} \right) = \prod\limits_{k = 1}^K {\frac{{\Upsilon \left( {{n_k} + 1,\frac{{{z_k}}}{{{P_k}{\sigma _k}^2\left( {1 - {\rho ^{2\left( {k + \delta  - 1} \right)}}} \right)}}} \right)}}{{{n_k}!}}}
\end{equation}
where ${\Upsilon \left( \cdot \right)}$ is incomplete Gamma function.
\end{theorem}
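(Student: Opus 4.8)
The plan is to obtain ${F_{{\gamma _1}, \cdots ,{\gamma _K}}}$ by integrating the joint PDF of Lemma~\ref{lemm:pdf} over the box $\prod_{k=1}^K[0,z_k]$, after first turning the integrand into an explicit power series in the $r_k$'s. For brevity write $\lambda_k:=P_k\sigma_k^2\big(1-\rho^{2(k+\delta-1)}\big)$ and $c_k:=\rho^{2(k+\delta-1)}/\big(1-\rho^{2(k+\delta-1)}\big)$, so that the exponent in (\ref{eqn:joint_pdf_snrs}) reads $-r_k/\lambda_k-c_kt$ and the argument of ${}_0F_1$ reads $c_ktr_k/\lambda_k$. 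The first step is to substitute the defining series ${}_0F_1(;1;x)=\sum_{n\ge0}x^n/(n!)^2$ into each of the $K$ factors and expand the product over $k$ into a $K$-fold sum over ${\bf n}=[n_1,\dots,n_K]$. Since all summands are non-negative, Tonelli's theorem lets one interchange this sum with the integral over $t$, whose inner integral is the elementary Gamma integral $\int_0^\infty t^{\sum_k n_k}e^{-(1+\sum_k c_k)t}\,dt=(\sum_k n_k)!\,/\,(1+\sum_k c_k)^{1+\sum_k n_k}$.

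The second step is to regroup the resulting constants. Writing $N=\sum_k n_k$ and $S=1+\sum_k c_k$, one factors $N!/S^{N+1}=(1/S)\,N!\prod_k S^{-n_k}$ and splits the $(n_k!)^2$ that came from the ${}_0F_1$ series as $n_k!\cdot n_k!$. One copy of $n_k!$, together with the $c_k$'s and the $S^{-n_k}$'s, assembles into exactly the multinomial weight $W_{\bf n}$ of (\ref{eqn:W_n_def}), while the remaining factors $r_k^{n_k}e^{-r_k/\lambda_k}/(n_k!\,\lambda_k^{n_k+1})$ are precisely the density of a Gamma RV with shape $n_k+1$ and scale $\lambda_k$. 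This exhibits $f_{\gamma_1,\dots,\gamma_K}$ as $\sum_{\bf n}W_{\bf n}\prod_{k}g_{n_k}(r_k)$, i.e.\ a $W_{\bf n}$-weighted mixture of joint densities of $K$ independent Gamma RVs ${\bf A}_{\bf n}$. Integrating term by term over $r_k\in[0,z_k]$ (again legitimate by non-negativity, via Tonelli) replaces each $g_{n_k}$ by $\Upsilon(n_k+1,z_k/\lambda_k)/n_k!$, which gives (\ref{eqn:joint_CDF_of_snrs_the}) with $F_{{\bf A}_{\bf n}}$ exactly as in (\ref{eqn:F_ind_snr_CDF}).

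Finally, to verify $\sum_{\bf n}W_{\bf n}=1$ I would group the sum by $N=\sum_k n_k$, apply the multinomial theorem to collapse the inner sum to $(\sum_k c_k)^N$, and then sum the geometric series $\sum_{N\ge0}(\sum_k c_k)^N/S^{N+1}=1/(S-\sum_k c_k)$, which equals $1$ since $S-\sum_k c_k=1$ by definition of $S$. The substantive difficulty here is bookkeeping rather than conceptual: I expect the main obstacle to be carefully carrying the correlation-dependent constants $\lambda_k$ and $c_k$ through the series expansion so that they reassemble into the compact closed forms of $W_{\bf n}$ and of the Gamma normalization, together with a clean justification (uniform non-negativity of all terms, hence Tonelli) for the two interchanges of infinite summation and integration.
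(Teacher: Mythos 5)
Your proposal is correct and follows essentially the same route as the paper: expand ${}_0F_1(;1;\cdot)$ into its power series, interchange summation with the $t$- and $r_k$-integrations, evaluate the $t$-integral as a Gamma integral and the $r_k$-integrals as lower incomplete Gamma functions, and regroup the constants into $W_{\bf n}$ and $F_{{\bf A}_{\bf n}}$ (your Tonelli justification just makes explicit an interchange the paper performs silently). The only minor difference is the normalization $\sum_{\bf n} W_{\bf n}=1$, which you verify directly via the multinomial theorem and a geometric series, while the paper deduces it by letting $z_1,\cdots,z_K\to\infty$ in the established identity; both are valid.
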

\begin{proof}
Please see Appendix \ref{app:proof_of_the_cdf}.
\end{proof}
Clearly from Theorem \ref{the:joint_CDF_mix}, the joint distribution of correlated RVs $\gamma_1,\cdots ,\gamma_K$ can be expressed as a mixture of $K$ independent Gamma RVs. Hereby, by putting (\ref{eqn:joint_CDF_of_snrs_the}) into (\ref{eqn:out_k_harq_fur}), the outage probability ${\mathcal P_{out}}\left( K \right)$ is given by
\begin{equation}\label{eqn:out_prob_fina}
{{\cal P}_{out}}\left( K \right) = \sum\limits_{{n_{\rm{1}}}, \cdots ,{n_K} = 0}^\infty  {{W_{\bf{n}}}{F_{{{\bf{A}}_{\bf{n}}}}}\left( {{2^R} - 1, \cdots ,{2^R} - 1} \right)}
\end{equation}

\subsection{Computation of ${\mathcal P_{out}}\left( K \right)$}
As shown in (\ref{eqn:out_prob_fina}), the outage probability is expressed as the sum of infinite series. It is hard to compute in practice. Here we propose an efficient truncation method to compute the outage probability with high accuracy. After truncation, the outage probability can be written as
%
\begin{equation}\label{eqn:out_prob_truncated}
{\tilde{\cal P}_{out}}\left( K \right) = \sum\limits_{t = 0}^N {\sum\limits_{{n_{\rm{1}}} +  \cdots  + {n_K} = t} {{W_{\bf{n}}}{F_{{{\bf{A}}_{\bf{n}}}}}\left( {{2^R} - 1, \cdots ,{2^R} - 1} \right)} }
\end{equation}
where $N$ is the truncation order. It follows the truncation error as
\begin{align}\label{eqn:trun_erro_def}
\varepsilon  &= {{\cal P}_{out}}\left( K \right) - {\tilde P_{out}}\left( K \right) \notag \\ &= \sum\limits_{t = N + 1}^\infty  {\sum\limits_{{n_{\rm{1}}} +  \cdots  + {n_K} = t} {{W_{\bf{n}}}{F_{{{\bf{A}}_{\bf{n}}}}}\left( {{2^R} - 1, \cdots ,{2^R} - 1} \right)} }.
\end{align}
It is upper bounded by
\begin{equation}\label{eqn:truncate_err_upper}
\varepsilon  \le \mathop {\sup }\limits_{\sum\limits_{k = 1}^K {{n_k}}  > N} \left( {{F_{{{\bf{A}}_{\bf{n}}}}}\left( {{2^R} - 1, \cdots ,{2^R} - 1} \right)} \right) \sum\limits_{t = N + 1}^\infty  {\sum\limits_{{n_{\rm{1}}} +  \cdots  + {n_K} = t} {{W_{\bf{n}}}} }
\end{equation}
Since ${F_{{{\bf{A}}_{\bf{n}}}}}\left( {{2^R} - 1, \cdots ,{2^R} - 1} \right) \le 1$, putting (\ref{eqn:W_n_def}) into (\ref{eqn:truncate_err_upper}) gives
\begin{align}\label{eqn:trun_err_upp_fur}
\varepsilon  &\le \frac{{\rm{1}}}{{1 + \sum\limits_{k = 1}^K {\frac{{{\rho ^{2\left( {k + \delta  - 1} \right)}}}}{{1 - {\rho ^{2\left( {k + \delta  - 1} \right)}}}}} }} \times \notag \\
& \sum\limits_{t = N + 1}^\infty  {\sum\limits_{{n_{\rm{1}}} +  \cdots  + {n_K} = t} {\frac{{t!}}{{\prod\limits_{k = 1}^K {{n_k}!} }}\prod\limits_{k = 1}^K {{{\left( {\frac{{\frac{{{\rho ^{2\left( {k + \delta  - 1} \right)}}}}{{1 - {\rho ^{2\left( {k + \delta  - 1} \right)}}}}}}{{1 + \sum\limits_{k = 1}^K {\frac{{{\rho ^{2\left( {k + \delta  - 1} \right)}}}}{{1 - {\rho ^{2\left( {k + \delta  - 1} \right)}}}}} }}} \right)}^{{n_k}}}} } } \notag\\
 &= \frac{{\rm{1}}}{{1 + \sum\limits_{k = 1}^K {\frac{{{\rho ^{2\left( {k + \delta  - 1} \right)}}}}{{1 - {\rho ^{2\left( {k + \delta  - 1} \right)}}}}} }}\sum\limits_{t = N + 1}^\infty  {{{\left( {\sum\limits_{k = 1}^K {\frac{{\frac{{{\rho ^{2\left( {k + \delta  - 1} \right)}}}}{{1 - {\rho ^{2\left( {k + \delta  - 1} \right)}}}}}}{{1 + \sum\limits_{k = 1}^K {\frac{{{\rho ^{2\left( {k + \delta  - 1} \right)}}}}{{1 - {\rho ^{2\left( {k + \delta  - 1} \right)}}}}} }}} } \right)}^t}} \notag \\
 &= {\left( {\frac{{\sum\limits_{k = 1}^K {\frac{{{\rho ^{2\left( {k + \delta  - 1} \right)}}}}{{1 - {\rho ^{2\left( {k + \delta  - 1} \right)}}}}} }}{{1 + \sum\limits_{k = 1}^K {\frac{{{\rho ^{2\left( {k + \delta  - 1} \right)}}}}{{1 - {\rho ^{2\left( {k + \delta  - 1} \right)}}}}} }}} \right)^{N + 1}}
\end{align}
where the first equality holds by using multinomial expansion. From (\ref{eqn:trun_err_upp_fur}), it is found that the upper bound of the truncation error decreases exponentially as $N$ increases, which demonstrates the effectiveness of the truncation method.

\subsection{Asymptotic analysis}
With the complicated expressions in (\ref{eqn:out_prob_fina}) and (\ref{eqn:out_prob_truncated}), little insight on the outage probability could be found. To better investigate the system behavior, asymptotic analysis under high SNR regime is thus of great importance to extract meaningful insight. To facilitate the analysis, we define $P_k = p_k P_T$. Under high SNR regime $P_T \to \infty$, asymptotic outage probability, diversity order and the impact of time correlation will be studied in the following.
\subsubsection{Asymptotic Outage Probability}
With (\ref{eqn:out_prob_fina}), the outage probability can be written as
\begin{multline}\label{eqn:out_prob_to_der_asy}
{{\cal P}_{out}}\left( K \right) = {W_{\bf{0}}}{F_{{{\bf{A}}_{\bf{0}}}}}\left( {{2^R} - 1, \cdots ,{2^R} - 1} \right) \\
\times \left( {1 + \frac{1}{{{W_{\bf{0}}}}}\sum\limits_{{\bf{n}} \ne {\bf{0}}} {{W_{\bf{n}}}\frac{{{F_{{{\bf{A}}_{\bf{n}}}}}\left( {{2^R} - 1, \cdots ,{2^R} - 1} \right)}}{{{F_{{{\bf{A}}_{\bf{0}}}}}\left( {{2^R} - 1, \cdots ,{2^R} - 1} \right)}}} } \right)
\end{multline}
From the definition of ${F_{{{\bf{A}}_{\bf{n}}}}}\left( {{z_1}, \cdots ,{z_K}} \right)$ in (\ref{eqn:F_ind_snr_CDF}), we can find a special property of ${F_{{{\bf{A}}_{\bf{n}}}}}\left( {{z_1}, \cdots ,{z_K}} \right)$ as follows.
\begin{lemma} \label{lem:cdf_An_asym}
The coefficient ${F_{{{\bf{A}}_{\bf{n}}}}}\left( {{z_1}, \cdots ,{z_K}} \right)$ can be written as
\begin{multline}\label{eqn:lem_cdf_ind_fin_asy}
{F_{{{\bf{A}}_{\bf{n}}}}}\left( {{z_1}, \cdots ,{z_K}} \right) = {P_T}^{-\sum\limits_{k = 1}^K {\left( {{n_k} + 1} \right)} } \times \\
\prod\limits_{k = 1}^K {\frac{{{{\left( {\frac{{{z_k}}}{{{p_k}{\sigma _k}^2\left( {1 - {\rho ^{2\left( {k + \delta  - 1} \right)}}} \right)}}} \right)}^{{n_k} + 1}}}}{{{n_k}!\left( {{n_k} + 1} \right)}}}  + o\left( {{P_T}^{-\sum\limits_{k = 1}^K {\left( {{n_k} + 1} \right)} }} \right)
\end{multline}
and it satisfies
\begin{align}\label{eqn:lemma_ratio_two_cdf_ind1}
\frac{{{F_{{{\bf{A}}_{\bf{n}}}}}\left( {{2^R} - 1, \cdots ,{2^R} - 1} \right)}}{{{F_{{{\bf{A}}_{\bf{0}}}}}\left( {{2^R} - 1, \cdots ,{2^R} - 1} \right)}} = o({P_T}^{-0.5}), \, {\bf n \ne 0},
\end{align}
where $o(\cdot)$ denotes higher-order infinitesimal.
\end{lemma}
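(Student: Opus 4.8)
The plan is to derive both parts of the lemma from the small-argument asymptotics of the lower incomplete Gamma function. Writing $P_k=p_kP_T$ in (\ref{eqn:F_ind_snr_CDF}), the argument of the $k$th incomplete Gamma factor becomes $x_k(P_T)=\frac{z_k}{p_kP_T\sigma_k^2\left(1-\rho^{2(k+\delta-1)}\right)}$, which vanishes as $P_T\to\infty$. Using the convergent expansion $\Upsilon(s,x)=x^s\sum_{j\ge0}\frac{(-x)^j}{j!(s+j)}=\frac{x^s}{s}+o(x^s)$ with $s=n_k+1$, I would first establish that
\[
\frac{\Upsilon\!\left(n_k+1,x_k(P_T)\right)}{n_k!}=\frac{1}{n_k!(n_k+1)}\left(\frac{z_k}{p_k\sigma_k^2\left(1-\rho^{2(k+\delta-1)}\right)}\right)^{n_k+1}\!\!\!P_T^{-(n_k+1)}+o\!\left(P_T^{-(n_k+1)}\right).
\]

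Next I would multiply the $K$ such factors. Each factor has the form $c_kP_T^{-(n_k+1)}+o\!\left(P_T^{-(n_k+1)}\right)$ with $c_k\ne0$ (using $z_k=2^R-1>0$ for $R>0$ and $1-\rho^{2(k+\delta-1)}>0$ for $0\le\rho<1$), so expanding the product, the single ``all-leading'' term gives $\big(\prod_kc_k\big)P_T^{-\sum_k(n_k+1)}$, while every remaining term carries at least one $o(\cdot)$ factor and is therefore $o\!\left(P_T^{-\sum_k(n_k+1)}\right)$. This yields (\ref{eqn:lem_cdf_ind_fin_asy}) at once.

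For the ratio in (\ref{eqn:lemma_ratio_two_cdf_ind1}), I would apply (\ref{eqn:lem_cdf_ind_fin_asy}) to numerator and denominator separately. In particular $F_{\mathbf{A}_{\mathbf{0}}}\!\left(2^R-1,\dots,2^R-1\right)=C_{\mathbf 0}P_T^{-K}\left(1+o(1)\right)$ with $C_{\mathbf 0}=\prod_{k=1}^K\frac{2^R-1}{p_k\sigma_k^2\left(1-\rho^{2(k+\delta-1)}\right)}\neq0$, so the denominator stays bounded away from $0$ after rescaling by $P_T^{K}$ and the division is legitimate. The quotient then behaves like $\frac{C_{\mathbf n}}{C_{\mathbf 0}}P_T^{-\sum_k(n_k+1)+K}\left(1+o(1)\right)=\frac{C_{\mathbf n}}{C_{\mathbf 0}}P_T^{-\sum_kn_k}\left(1+o(1)\right)$; since $\mathbf n\ne\mathbf 0$ forces $\sum_kn_k\ge1$, this is $O\!\left(P_T^{-1}\right)=o\!\left(P_T^{-0.5}\right)$, which is the claim.

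I do not expect a serious obstacle here; the argument is essentially bookkeeping with Landau symbols. The one point that must be handled carefully is the non-vanishing of the leading coefficients (equivalently $c_k\ne0$): it guarantees that ``leading term plus $o(\cdot)$'' captures the exact order $P_T^{-\sum_k(n_k+1)}$, so that forming the product in the second step and the quotient in the third step preserves the stated orders, and this relies only on $R>0$ and $0\le\rho<1$. A minor care point is propagating the $o(\cdot)$ terms consistently through the finite product and the division, which is routine.
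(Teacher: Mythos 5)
Your proposal is correct and follows essentially the same route as the paper's proof: both expand the lower incomplete Gamma function via its convergent series (the paper cites Gradshteyn--Ryzhik Eq.\ 8.354.1, which is exactly your expansion), keep the leading $P_T^{-(n_k+1)}$ term of each factor, and then form the ratio and use $\sum_k n_k\ge 1$ for $\mathbf{n}\ne\mathbf{0}$ to conclude $o(P_T^{-0.5})$. Your explicit care about the nonvanishing leading coefficient of the denominator is a slightly tidier justification of the division step than the paper provides, but it is not a different argument.
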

\begin{proof}
Please see Appendix \ref{app:proof_lemma_2}.
\end{proof}

By using the result in Lemma \ref{lem:cdf_An_asym}, the outage probability(\ref{eqn:out_prob_to_der_asy}) under high SNR can be approximated as
\begin{align}
\label{eqn:out_prob_asym_ind}
{{\cal P}_{out}}\left( K \right) &= {W_{\bf{0}}}{F_{{{\bf{A}}_{\bf{0}}}}}\left( {{2^R} - 1, \cdots ,{2^R} - 1} \right)\left( {1 + o({P_T}^{ - 0.5})} \right) \nonumber \\
&\approx {W_{\bf{0}}}{F_{{{\bf{A}}_{\bf{0}}}}}\left( {{2^R} - 1, \cdots ,{2^R} - 1} \right)
\end{align}
Clearly from (\ref{eqn:out_prob_asym_ind}), under high SNR regime, the joint CDF of correlated RVs $\gamma_1,\cdots ,\gamma_K$ can be approximated as a weighted joint CDF of independent Gamma RVs with parameters $(1,{{P_k}{\sigma _k}^2\left( {1 - {\rho ^{2\left( {k + \delta  - 1} \right)}}} \right)})$. With this result, optimal system design for Type I HARQ can be simplified and meaningful insight can be extracted.

By plugging (\ref{eqn:lem_cdf_ind_fin_asy}) into (\ref{eqn:out_prob_asym_ind}), it follows that
\begin{multline}\label{eqn:asym_out_fin}
{{\cal P}_{out}}\left( K \right)
= {\frac{{{W_{\bf{0}}}{{\left( {{2^R} - 1} \right)}^K}}}{{{P_T}^K}}\prod\limits_{k = 1}^K {\frac{1}{{{p_k}{\sigma _k}^2\left( {1 - {\rho ^{2\left( {k + \delta  - 1} \right)}}} \right)}}} } \\
+ {o\left( {{P_T}^{ - K}} \right)}
\end{multline}
By using (\ref{eqn:W_n_def}), the asymptotic outage probability can be finally written as
\begin{equation}\label{eqn:out_prob_fina_phys}
{{\cal P}_{out}}\left( K \right) \approx \underbrace {{{\left( {{2^R} - 1} \right)}^K}}_{A}\underbrace {\prod\limits_{k = 1}^K {\frac{1}{{{P_k}{\sigma _k}^2}}} }_{B}\underbrace {\frac{{\rm{1}}}{{\ell \left( \rho,K  \right)}}}_{C}
\end{equation}
where $\ell \left( \rho,K  \right) = \left( {1 + \sum\limits_{k = 1}^K {\frac{{{\rho ^{2\left( {k + \delta  - 1} \right)}}}}{{1 - {\rho ^{2\left( {k + \delta  - 1} \right)}}}}} } \right)\prod\limits_{k = 1}^K {\left( {1 - {\rho ^{2\left( {k + \delta  - 1} \right)}}} \right)} $. With (\ref{eqn:out_prob_fina_phys}), the effects of coding rate, transmit powers and time correlation can now be clearly seen from the individual terms A, B and C, respectively.

\subsubsection{Diversity Order}
 The diversity order $d$ is defined as \cite{chelli2014performance,zheng2003diversity}
\begin{equation}\label{eqn:diver_order_def}
d =  - \mathop {\lim }\limits_{{P_T} \to \infty } \frac{{\ln \left( {{{\cal P}_{out}}\left( K \right)} \right)}}{{\ln \left( {{P_T}} \right)}}
\end{equation}
By using (\ref{eqn:asym_out_fin}), it follows that
\begin{align}\label{eqn:diversity_order_def_der}
d &= - \mathop {\lim }\limits_{{P_T} \to \infty } \frac{{\ln ( {\frac{{{W_{\bf{0}}}{{\left( {{2^R} - 1} \right)}^K}}}{{{P_T}^K}}\prod\limits_{k = 1}^K {\frac{1}{{{p_k}{\sigma _k}^2\left( {1 - {\rho ^{2\left( {k + \delta  - 1} \right)}}} \right)}}}  + o\left( {{P_T}^{ - K}} \right)} )}}{{\ln \left( {{P_T}} \right)}} \notag\\
 &=  - \mathop {\lim }\limits_{{P_T} \to \infty } \frac{{\ln \left( {{P_T}^{ - K}} \right)}}{{\ln \left( {{P_T}} \right)}} - \mathop {\lim }\limits_{{P_T} \to \infty } \frac{{\ln \left( {1 + \frac{{o\left( {{P_T}^{ - K}} \right)}}{{{P_T}^{ - K}}}} \right)}}{{\ln \left( {{P_T}} \right)}} \notag\\
&= K - \mathop {\lim }\limits_{{P_T} \to \infty } \frac{{o\left( {{P_T}^{ - K}} \right)}}{{{P_T}^{ - K}\ln \left( {{P_T}} \right)}} = K, \, \rho \ne 1,
\end{align}
The third equality holds by using the equivalent infinitesimals as $\ln \left( {1 + \frac{{o\left( {{P_T}^{ - K}} \right)}}{{{P_T}^{ - K}}}} \right) \sim \frac{{o\left( {{P_T}^{ - K}} \right)}}{{{P_T}^{ - K}}}$. It is thus proved that the diversity order of Type I HARQ is equal to the number of transmissions $K$, i.e., full diversity can be achieved even under time-correlated fading channels when $\rho \ne 1$. It is worth noting that the conclusion of full diversity does not hold in the case of fully correlated fading channels, i.e., $\rho=1$. Under fully correlated fading channels ($\rho=1$), no time diversity can be achieved from retransmissions and the diversity order reduces to $1$.

\subsubsection{Impact of Time Correlation}
From (\ref{eqn:out_prob_fina_phys}), the impact of time correlation on outage probability under high SNR regime can be further analyzed from ${\ell \left( \rho,K  \right)}$. The result is shown in the following lemma.
\begin{lemma}\label{lem:lemma_3}
${\ell \left( \rho,K  \right)}$ is a decreasing function with respect to the time correlation coefficient $\rho$. Specifically, ${\ell \left( \rho,K  \right)} \le {\ell \left( 0,K  \right)} = 1$.
\end{lemma}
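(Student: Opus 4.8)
The plan is to first recast $\ell(\rho,K)$ in a form whose dependence on $\rho$ is transparent, then prove monotonicity, and finally read off the endpoint value. Put $a_k:=\rho^{2(k+\delta-1)}$. Since $k\ge 1$ and $\delta>0$, the exponent $2(k+\delta-1)$ is strictly positive, so $a_k\in[0,1)$ for every $\rho\in[0,1)$ and each $a_k$ is a non-decreasing function of $\rho$. Using $\tfrac{a_k}{1-a_k}\prod_{i=1}^K(1-a_i)=a_k\prod_{i\ne k}(1-a_i)$ inside the definition, one obtains the normal form
\begin{equation}\label{eqn:ell_symmetric_form}
\ell(\rho,K)=\prod_{k=1}^K(1-a_k)+\sum_{k=1}^K a_k\prod_{i\ne k}(1-a_i),
\end{equation}
which is precisely $\Pr(X_1+\cdots+X_K\le 1)$ for independent Bernoulli variables $X_k$ with $\Pr(X_k=1)=a_k$, i.e.\ the probability of at most one ``success''. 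This interpretation already suggests both the monotonicity and the value at $\rho=0$.

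Next I would establish monotonicity. Differentiating \eqref{eqn:ell_symmetric_form} with respect to a single $a_j$, the contribution of $\prod_k(1-a_k)$ cancels against that of the $k=j$ summand, and what remains is
\[
\frac{\partial \ell}{\partial a_j}=-\sum_{i\ne j}a_i\prod_{k\ne i,\,k\ne j}(1-a_k)\le 0,
\]
since every factor is non-negative on $[0,1)$. Because $a_j$ is non-decreasing in $\rho$ (indeed $\tfrac{da_j}{d\rho}=2(j+\delta-1)\rho^{2(j+\delta-1)-1}\ge 0$), the chain rule gives $\tfrac{d\ell}{d\rho}=\sum_{j}\tfrac{\partial\ell}{\partial a_j}\tfrac{da_j}{d\rho}\le 0$ on $[0,1)$, so $\ell(\rho,K)$ is decreasing (strictly so once $K\ge 2$ and $\rho>0$). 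Equivalently, in the probabilistic picture: raising $\rho$ raises every $a_k$, hence stochastically increases each $X_k$ and therefore the sum $\sum_k X_k$ (realize the $X_k$ by a monotone coupling), which can only decrease $\Pr(\sum_k X_k\le 1)$.

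Finally I would evaluate the endpoint: at $\rho=0$ every $a_k=0$, so from \eqref{eqn:ell_symmetric_form} — or directly from the definition, where the sum vanishes and the product equals $1$ — we get $\ell(0,K)=1$; in the probabilistic picture $\sum_k X_k=0$ almost surely. Combining this with the monotonicity above yields $\ell(\rho,K)\le\ell(0,K)=1$ for all $\rho\in[0,1)$, which is the claim. I expect the only real obstacle to be spotting the normal form \eqref{eqn:ell_symmetric_form}; once $\ell$ is written as an ``at most one success'' probability (equivalently, in terms of elementary symmetric functions of the $1-a_k$), the sign of the derivative and the endpoint value are both routine.
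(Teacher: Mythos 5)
Your proof is correct, and it follows a genuinely different route from the paper's. The paper argues by a finite-increment replacement: for $\Delta\rho>0$ it bounds $\ell(\rho+\Delta\rho,K)$ from above by swapping the factor $1-(\rho+\Delta\rho)^{2\delta}$ for the larger $1-\rho^{2\delta}$, pulling the $k=1$ term out of the product, and then asserting that repeating the same manipulation for $k=2,\dots,K$ eventually replaces every occurrence of $\rho+\Delta\rho$ by $\rho$, giving $\ell(\rho+\Delta\rho,K)<\ell(\rho,K)$; the iterative step is only sketched (``following the same procedure''). You instead set $a_k=\rho^{2(k+\delta-1)}$ and expand the prefactor to get the symmetric form $\ell=\prod_{k}(1-a_k)+\sum_{k}a_k\prod_{i\ne k}(1-a_i)$, read it as the probability of at most one success among independent Bernoulli trials with parameters $a_k$, and conclude from $\partial\ell/\partial a_j=-\sum_{i\ne j}a_i\prod_{k\ne i,\,k\ne j}(1-a_k)\le 0$ (the cancellation you invoke is exactly right) together with the fact that each $a_j$ is non-decreasing in $\rho$; the endpoint $\ell(0,K)=1$ is immediate since all $a_k$ vanish at $\rho=0$. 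Your route buys a fully self-contained argument with no induction left implicit, a transparent structural interpretation of $\ell$, and strictness of the decrease for $K\ge 2$, $\rho>0$; the paper's route avoids calculus, but your monotone-coupling remark achieves the same (and incidentally sidesteps the inessential issue that $da_1/d\rho$ is unbounded at $\rho=0$ when $\delta<1/2$), so nothing essential is lost by your approach.
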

\begin{proof}
Please see Appendix \ref{app:proof_lemma}.
\end{proof}
Lemma \ref{lem:lemma_3} reveals that the the presence of time correlation will degrade the system performance under high SNR regime, that is, it causes the increase of outage probability.

\section{Numerical Results and Discussions} \label{sec:num}
In this section, numerical results are shown to test the accuracy of our outage analysis. In the following, we take systems with $\sigma_1=\cdots=\sigma_K=1$ and $R=2~\rm bps/Hz$ as examples.

In Fig. \ref{fig:1}, the outage probability is plotted against transmit power $P_T$ with $\rho=0.5$ and $N=5$. There is a perfect match between Monte Carlo simulation results and analytical results, which demonstrates the correctness of our analysis. Under high SNR regime, four curves coincide well with each other. In addition, it can be readily found that the diversity order is equal to the number of transmissions. For example, for $K=4$, as the transmit power $P_T$ increases from $20\rm dB$ to $30\rm dB$, the outage probability reduces from $10^{-6}$ to $10^{-10}$. Thus the diversity order is $4$. Furthermore, it can be observed that the increase of the number of transmissions $K$ will cause a significant reduction of outage probability.
\begin{figure}
  \centering
  \includegraphics[width=3.5in]{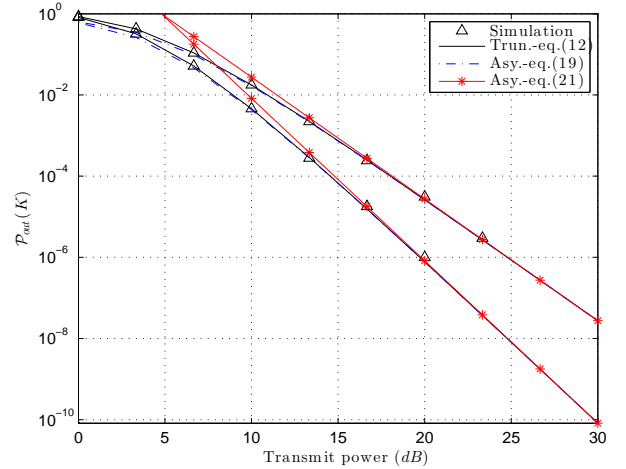}\\
  \caption{Outage probability versus transmit power $P_T$.}\label{fig:1}
\end{figure}

To show the effect of truncation order $N$, the approximated outage probability after truncation $\tilde{\mathcal P}_{out}(K)$ is plotted versus $N$ with $K=4$ in Fig. \ref{fig:out_asy}. It is readily found that truncation order of $N=5$ is enough to well approximate ${\mathcal P}_{out}(K)$ with negligible error. In addition, low truncation order $N$ is sufficient to achieve a good approximation of ${\mathcal P}_{out}(K)$ under high SNR regime or low $\rho$. For example, the truncation order of $N=2$ can achieve a good approximation when $\rho=0.5$ or $P_T=10\rm dB$.
\begin{figure}
  \centering
  \includegraphics[width=3.5in]{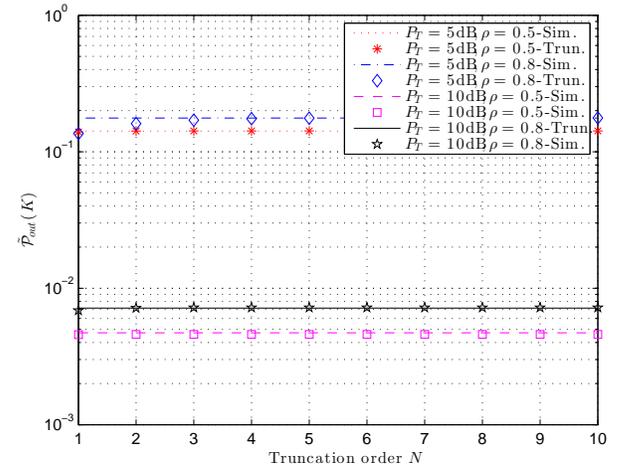}\\
  \caption{Effect of truncation order $N$.}\label{fig:out_asy}
\end{figure}

Fig. \ref{fig:diver} shows the impact of time correlation on Type I HARQ. It is readily observed that $\ell (\rho,K)$ decreases with $\rho$, which reveals that time correlation has negative effect on outage probability under high SNR regime.
\begin{figure}
  \centering
  \includegraphics[width=3.5in]{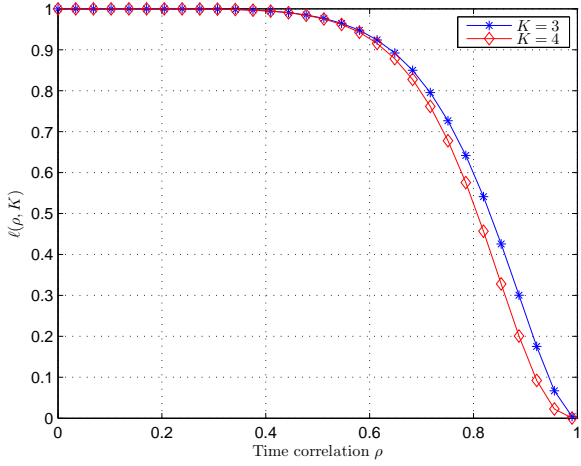}\\
  \caption{Impact of time correlation.}\label{fig:diver}
\end{figure}

\section{Conclusions}\label{sec:con}
Type I HARQ over time correlated Rayleigh fading channels has been particularly analyzed in this paper. Its outage probability has been derived as a weighted sum of joint CDF of independent Gamma RV, which enables an efficient calculation of the outage probability based on truncation. Then asymptotic analysis has also been conducted to extract meaningful insight of various parameters. It has revealed that full diversity can be achieved and time correlation has detrimental impact on system performance.
\appendices
\section{Proof of Theorem \ref{the:joint_CDF_mix}} \label{app:proof_of_the_cdf}
The joint CDF ${F_{{\gamma _1}, \cdots ,{\gamma _K}}}\left( {{z_1}, \cdots ,{z_K}} \right)$ can be written as
\begin{multline}\label{eqn:joint_cdf_der}
{F_{{\gamma _1}, \cdots ,{\gamma _K}}}\left( {{z_1}, \cdots ,{z_K}} \right) \\
  = \int\limits_0^{{z_1}} { \cdots \int\limits_0^{{z_K}} {{f_{{\gamma _1}, \cdots ,{\gamma _K}}}\left( {{r_1}, \cdots ,{r_K}} \right)d{r_1} \cdots d{r_k}} }
\end{multline}
It follows from Lemma \ref{lemm:pdf} that
\begin{multline}\label{eqn:cdf_der_from_lemma_1}
{F_{{\gamma _1}, \cdots ,{\gamma _K}}}\left( {{z_1}, \cdots ,{z_K}} \right) = \int\limits_0^\infty  {{e^{ - t}}\prod\limits_{k = 1}^K {\int\limits_0^{{z_k}} {\frac{1}{{{P_k}{\sigma _k}^2\left( {1 - {\rho ^{2\left( {k + \delta  - 1} \right)}}} \right)}}} } } \\
\times {e^{ - \frac{{{r_k} + {P_k}{\sigma _k}^2{\rho ^{2\left( {k + \delta  - 1} \right)}}t}}{{{P_k}{\sigma _k}^2\left( {1 - {\rho ^{2\left( {k + \delta  - 1} \right)}}} \right)}}}}_0{F_1}\left( {;1;\frac{{{\rho ^{2\left( {k + \delta  - 1} \right)}}t{r_k}}}{{{P_k}{\sigma _k}^2{{\left( {1 - {\rho ^{2\left( {k + \delta  - 1} \right)}}} \right)}^2}}}} \right)d{r_k}dt
\end{multline}
By using the series representation of hypergeometric function \cite[Eq. 1.116]{mathai2009h}, (\ref{eqn:cdf_der_from_lemma_1}) can be derived as
\begin{multline}\label{eqn:cdf_der_serie_exp}
{F_{{\gamma _1}, \cdots ,{\gamma _K}}}\left( {{z_1}, \cdots ,{z_K}} \right) = \\
\prod\limits_{k = 1}^K {\frac{1}{{{P_k}{\sigma _k}^2\left( {1 - {\rho ^{2\left( {k + \delta  - 1} \right)}}} \right)}}} \int\limits_0^\infty  {{e^{ - \left( {1 + \sum\limits_{k = 1}^K {\frac{{{\rho ^{2\left( {k + \delta  - 1} \right)}}}}{{1 - {\rho ^{2\left( {k + \delta  - 1} \right)}}}}} } \right)t}}}  \times \\
\prod\limits_{k = 1}^K {\int\limits_0^{{z_k}} {{e^{ - \frac{{{r_k}}}{{{P_k}{\sigma _k}^2\left( {1 - {\rho ^{2\left( {k + \delta  - 1} \right)}}} \right)}}}}\sum\limits_{{n_k} = 0}^\infty  {\frac{{{{\left( {\frac{{{\rho ^{2\left( {k + \delta  - 1} \right)}}t{r_k}}}{{{P_k}{\sigma _k}^2{{\left( {1 - {\rho ^{2\left( {k + \delta  - 1} \right)}}} \right)}^2}}}} \right)}^{{n_k}}}}}{{{{\left( {{n_k}!} \right)}^2}}}} d{r_k}} } dt
\end{multline}
By exchanging the order of summation and multiplication, it yields
\begin{multline}\label{eqn:cdf_exchan_order_give}
{F_{{\gamma _1}, \cdots ,{\gamma _K}}}\left( {{z_1}, \cdots ,{z_K}} \right) = \prod\limits_{k = 1}^K {\frac{1}{{{P_k}{\sigma _k}^2\left( {1 - {\rho ^{2\left( {k + \delta  - 1} \right)}}} \right)}}}   \\
\times \sum\limits_{{n_{\rm{1}}}, \cdots ,{n_K} = 0}^\infty  {\prod\limits_{k = 1}^K {\frac{{{{\left( {\frac{{{\rho ^{2\left( {k + \delta  - 1} \right)}}}}{{{P_k}{\sigma _k}^2{{\left( {1 - {\rho ^{2\left( {k + \delta  - 1} \right)}}} \right)}^2}}}} \right)}^{{n_k}}}}}{{{{\left( {{n_k}!} \right)}^2}}}} } \\
\times \int\limits_0^\infty  {{t^{\sum\limits_{k = 1}^K {{n_k}} }}{e^{ - \left( {1 + \sum\limits_{k = 1}^K {\frac{{{\rho ^{2\left( {k + \delta  - 1} \right)}}}}{{1 - {\rho ^{2\left( {k + \delta  - 1} \right)}}}}} } \right)t}}dt} \\
\times \prod\limits_{k = 1}^K {\int\limits_0^{{z_k}} {{r_k}^{{n_k}}{e^{ - \frac{{{r_k}}}{{{P_k}{\sigma _k}^2\left( {1 - {\rho ^{2\left( {k + \delta  - 1} \right)}}} \right)}}}}d{r_k}} }
\end{multline}
By using \cite[Eqs. 3.381.1, and 3.381.4]{gradshteyn1965table} and conducting some algebraic manipulations, Theorem \ref{the:joint_CDF_mix} directly follows. Moreover, by taking limits of (\ref{eqn:joint_CDF_of_snrs_the}) as $z_1,\cdots,z_K \to \infty$, and using $\mathop {\lim }\limits_{{z_1}, \cdots ,{z_K} \to \infty } {F_{{\gamma _1}, \cdots ,{\gamma _k}}}\left( {{z_1}, \cdots ,{z_K}} \right) = \mathop {\lim }\limits_{{z_1}, \cdots ,{z_K} \to \infty } {F_{{\bf A_{\bf{n}}}}}\left( {{z_1}, \cdots ,{z_K}} \right) = 1$, $\sum\limits_{{n_{\rm{1}}}, \cdots ,{n_K} = 0}^\infty  {{W_{\bf{n}}}} {\rm{ = }}1$ holds without dispute.

\section{Proof of Lemma \ref{lem:cdf_An_asym}}\label{app:proof_lemma_2}
By using \cite[Eq. 8.354.1]{gradshteyn1965table}, (\ref{eqn:F_ind_snr_CDF}) can be further written as
\begin{multline}\label{eqn:F_ind_snr_cdf_fur}
{F_{{{\bf{A}}_{\bf{n}}}}}\left( {{z_1}, \cdots ,{z_K}} \right) = \prod\limits_{k = 1}^K   {\frac{1}{{{n_k}!}}} \times\\
 \sum\limits_{{m_k} = 0}^\infty  {\frac{{{{\left( { - 1} \right)}^{{m_k}}}{{\left( {\frac{{{z_k}}}{{{p_k}{P_T}{\sigma _k}^2\left( {1 - {\rho ^{2\left( {k + \delta  - 1} \right)}}} \right)}}} \right)}^{{n_k} + {m_k} + 1}}}}{{{m_k}!\left( {{n_k} + {m_k} + 1} \right)}}}
\end{multline}
Under high SNR regime, it follows from (\ref{eqn:F_ind_snr_cdf_fur}) that
\begin{multline}\label{eqn:F_ind_snr_cdf_fur2}
{F_{{{\bf{A}}_{\bf{n}}}}}\left( {{z_1}, \cdots ,{z_K}} \right) = {P_T}^{ - \sum\limits_{k = 1}^K {\left( {{n_k} + 1} \right)} }\prod\limits_{k = 1}^K {\frac{{{{\left( {\frac{{{z_k}}}{{{p_k}{\sigma _k}^2\left( {1 - {\rho ^{2\left( {k + \delta  - 1} \right)}}} \right)}}} \right)}^{{n_k} + 1}}}}{{{n_k}!\left( {{n_k} + 1} \right)}}}  \\
 + o\left( {{P_T}^{ - \sum\limits_{k = 1}^K {\left( {{n_k} + 1} \right)} }} \right)
\end{multline}
where the notation $o(x)$ defines high order infinitesimal of $x$, i.e., the ratio $o(x)/x$ approaches to zero as $x \to 0$. Thus we have
\begin{multline}\label{eqn:ratio_two_cdf_ind}
\frac{{{F_{{{\bf{A}}_{\bf{n}}}}}\left( {{2^R} - 1, \cdots ,{2^R} - 1} \right)}}{{{F_{{{\bf{A}}_{\bf{0}}}}}\left( {{2^R} - 1, \cdots ,{2^R} - 1} \right)}} = {P_T}^{ - \sum\limits_{k = 1}^K {{n_k}} } \times \\
\prod\limits_{k = 1}^K {\frac{{{{\left( {\frac{{{z_k}}}{{{p_k}{\sigma _k}^2\left( {1 - {\rho ^{2\left( {k + \delta  - 1} \right)}}} \right)}}} \right)}^{{n_k}}}}}{{{n_k}!\left( {{n_k} + 1} \right)}}}  + o\left( {{P_T}^{ - \sum\limits_{k = 1}^K {{n_k}} }} \right)
\end{multline}
When $\bf n \ne 0$, $\sum\limits_{k = 1}^K {{n_k}} \ge 1$ and it follows that
\begin{align}\label{eqn:ratio_two_cdf_ind1}
\frac{{{F_{{{\bf{A}}_{\bf{n}}}}}\left( {{2^R} - 1, \cdots ,{2^R} - 1} \right)}}{{{F_{{{\bf{A}}_{\bf{0}}}}}\left( {{2^R} - 1, \cdots ,{2^R} - 1} \right)}} = o({P_T}^{-\kappa}),\, {\bf n \ne 0}, \kappa  < \sum\limits_{k = 1}^K {{n_k}} .
\end{align}
Thus the lemma holds in the case of $\kappa=0.5$.

\section{Proof of Lemma \ref{lem:lemma_3}} \label{app:proof_lemma}
To prove the monotonically increasing of ${\ell \left( \rho,K  \right)}$ with respect to $\rho$, we assume ${\Delta \rho } > 0$. Then it follows from the definition that
\begin{align}\label{eqn:rho_prov1}
\ell \left( {\rho  + \Delta \rho ,K} \right) &= \left( {1 + \sum\limits_{k = 2}^K {\frac{{{{\left( {\rho  + \Delta \rho } \right)}^{2\left( {k + \delta  - 1} \right)}}\left( {1 - {{\left( {\rho  + \Delta \rho } \right)}^{2\delta }}} \right)}}{{1 - {{\left( {\rho  + \Delta \rho } \right)}^{2\left( {k + \delta  - 1} \right)}}}}} } \right) \notag \\
&\times \prod\limits_{k = 2}^K {\left( {1 - {{\left( {\rho  + \Delta \rho } \right)}^{2\left( {k + \delta  - 1} \right)}}} \right)}
\end{align}
Since ${\Delta \rho } > 0$, the following inequality holds
\begin{align}\label{eqn:rho_prov2}
\ell \left( {\rho  + \Delta \rho ,K} \right) &< \left( {1 + \sum\limits_{k = 2}^K {\frac{{{{\left( {\rho  + \Delta \rho } \right)}^{2\left( {k + \delta  - 1} \right)}}\left( {1 - {\rho ^{2\delta }}} \right)}}{{1 - {{\left( {\rho  + \Delta \rho } \right)}^{2\left( {k + \delta  - 1} \right)}}}}} } \right)\notag \\
&\prod\limits_{k = 2}^K {\left( {1 - {{\left( {\rho  + \Delta \rho } \right)}^{2\left( {k + \delta  - 1} \right)}}} \right)}
\end{align}
Finally, (\ref{eqn:rho_prov2}) can be rewritten as
\begin{multline}\label{eqn:rho_prov}
\ell \left( {\rho  + \Delta \rho ,K} \right) < \left( {1 + \frac{{{\rho ^{2\delta }}}}{{1 - {\rho ^{2\delta }}}} + \sum\limits_{k = 2}^K {\frac{{{{\left( {\rho  + \Delta \rho } \right)}^{2\left( {k + \delta  - 1} \right)}}}}{{1 - {{\left( {\rho  + \Delta \rho } \right)}^{2\left( {k + \delta  - 1} \right)}}}}} } \right)  \\
\times \left( {1 - {\rho ^{2\delta }}} \right) \prod\limits_{k = 2}^K {\left( {1 - {{\left( {\rho  + \Delta \rho } \right)}^{2\left( {k + \delta  - 1} \right)}}} \right)}
\end{multline}
Following the same procedure as (\ref{eqn:rho_prov1}) - (\ref{eqn:rho_prov}), we can prove that
\begin{equation}\label{eqn:ell_prov_rel}
\ell \left( {\rho  + \Delta \rho,K } \right) < \ell \left( \rho,K  \right)
\end{equation}
\bibliographystyle{ieeetran}
\bibliography{manuscript_1}

\begin{thebibliography}{10}
\providecommand{\url}[1]{#1}
\csname url@samestyle\endcsname
\providecommand{\newblock}{\relax}
\providecommand{\bibinfo}[2]{#2}
\providecommand{\BIBentrySTDinterwordspacing}{\spaceskip=0pt\relax}
\providecommand{\BIBentryALTinterwordstretchfactor}{4}
\providecommand{\BIBentryALTinterwordspacing}{\spaceskip=\fontdimen2\font plus
\BIBentryALTinterwordstretchfactor\fontdimen3\font minus
  \fontdimen4\font\relax}
\providecommand{\BIBforeignlanguage}[2]{{%
\expandafter\ifx\csname l@#1\endcsname\relax
\typeout{** WARNING: IEEEtran.bst: No hyphenation pattern has been}%
\typeout{** loaded for the language `#1'. Using the pattern for}%
\typeout{** the default language instead.}%
\else
\language=\csname l@#1\endcsname
\fi
#2}}
\providecommand{\BIBdecl}{\relax}
\BIBdecl

\bibitem{dahlman20134g}
E.~Dahlman, S.~Parkvall, and J.~Skold, \emph{4G: LTE/LTE-advanced for mobile
  broadband}.\hskip 1em plus 0.5em minus 0.4em\relax Academic press, 2013.

\bibitem{shen2009average}
C.~Shen, T.~Liu, and M.~P. Fitz, ``On the average rate performance of
  {hybrid-ARQ} in quasi-static fading channels,'' \emph{{IEEE} Trans. Commun.},
  vol.~57, no.~11, pp. 3339--3352, Nov. 2009.

\bibitem{caire2001throughput}
G.~Caire and D.~Tuninetti, ``The throughput of {hybrid-ARQ} protocols for the
  {Gaussian} collision channel,'' \emph{{IEEE} Trans. Inf. Theory}, vol.~47,
  no.~5, pp. 1971--1988, Jul. 2001.

\bibitem{zheng2005optimizing}
H.~Zheng and H.~Viswanathan, ``Optimizing the {ARQ} performance in downlink
  packet data systems with scheduling,'' \emph{{IEEE} Trans. Wireless Commun.},
  vol.~4, no.~2, pp. 495--506, Mar. 2005.

\bibitem{chaitanya2015energy}
T.~Chaitanya and T.~Le-Ngoc, ``Energy-efficient adaptive power allocation for
  incremental {MIMO} systems,'' \emph{{IEEE} Trans. Veh. Technol.}, 2015.

\bibitem{kim2011optimal}
S.~M. Kim, W.~Choi, T.~W. Ban, and D.~K. Sung, ``Optimal rate adaptation for
  hybrid {ARQ} in time-correlated {Rayleigh} fading channels,'' \emph{{IEEE}
  Trans. Wireless Commun.}, vol.~10, no.~3, pp. 968--979, Mar. 2011.

\bibitem{jin2011optimal}
H.~Jin, C.~Cho, N.-O. Song, and D.~K. Sung, ``Optimal rate selection for
  persistent scheduling with {HARQ} in time-correlated {Nakagami-m} fading
  channels,'' \emph{{IEEE} Trans. Wireless Commun.}, vol.~10, no.~2, pp.
  637--647, Feb. 2011.

\bibitem{beaulieu2011novel}
N.~C. Beaulieu and K.~T. Hemachandra, ``Novel simple representations for
  {Gaussian} class multivariate distributions with generalized correlation,''
  \emph{{IEEE} Trans. Inf. Theory}, vol.~57, no.~12, pp. 8072--8083, Dec. 2011.

\bibitem{olver2010nist}
F.~W. Olver, \emph{NIST handbook of mathematical functions}.\hskip 1em plus
  0.5em minus 0.4em\relax Cambridge University Press, 2010.

\bibitem{chelli2014performance}
A.~Chelli, E.~Zedini, M.-S. Alouini, J.~Barry, and M.~Patzold, ``Performance
  and delay analysis of hybrid {ARQ} with incremental redundancy over double
  {Rayleigh} fading channels,'' \emph{{IEEE} Trans. Wireless Commun.}, vol.~13,
  no.~11, pp. 6245--6258, Nov. 2014.

\bibitem{zheng2003diversity}
L.~Zheng and D.~N.~C. Tse, ``Diversity and multiplexing: a fundamental tradeoff
  in multiple-antenna channels,'' \emph{{IEEE} Trans. Inf. Theory}, vol.~49,
  no.~5, pp. 1073--1096, May 2003.

\bibitem{mathai2009h}
A.~Mathai, R.~K. Saxena, and H.~J. Haubold, \emph{The H-function}.\hskip 1em
  plus 0.5em minus 0.4em\relax Springer, 2009.

\bibitem{gradshteyn1965table}
I.~S. Gradshteyn, I.~M. Ryzhik, A.~Jeffrey, D.~Zwillinger, and S.~Technica,
  \emph{Table of integrals, series, and products}.\hskip 1em plus 0.5em minus
  0.4em\relax Academic press New York, 1965, vol.~6.

\end{thebibliography}
\end{document}